\newcommand{\R}{{\mathbb R}}
\newcommand{\p}{{\bf p}}
\newcommand{\y}{{\bf y}}
\newcommand{\x}{{\bf x}}
\newcommand{\bk}{{\bf k}}
\numberwithin{equation}{section}
\theoremstyle{plain}
\newtheorem{theorem}{Theorem}[section]
\newtheorem{lemma}[theorem]{Lemma}
\begin{document}

\noindent 
\begin{center}
\textbf{\large Faraday effect revisited: sum rules and convergence issues}
\end{center}

\begin{center}
March 29, 2010
\end{center}

\vspace{0.5cm}

\noindent 

\begin{center}
\textbf{ 
Horia D. Cornean\footnote{Department of Mathematical Sciences, 
    Aalborg
    University, Fredrik Bajers Vej 7G, 9220 Aalborg, Denmark; e-mail:
    cornean@math.aau.dk} and Gheorghe Nenciu\footnote{ Inst. of Math. ``Simion Stoilow'' of
the Romanian Academy, P. O. Box 1-764, RO-014700 Bucharest, Romania, and Department of Mathematical Sciences, 
    Aalborg
    University, Fredrik Bajers Vej 7G, 9220 Aalborg, Denmark;
e-mail: Gheorghe.Nenciu@imar.ro}
}
     
\end{center}

\vspace{0.5cm}

\noindent

\begin{abstract}
This is the third paper of a series revisiting the Faraday
effect. The question of the absolute convergence of the sums over the band indices entering the Verdet constant 
is considered. In general, 
sum rules and traces per unit volume play an important role in 
solid state physics, and they give rise to certain convergence
problems widely ignored by physicists. We give a complete answer in
the case of smooth potentials and formulate an open problem
related to less regular perturbations. 
\end{abstract}

\section{Introduction}

In this paper, which is the third in a series, we continue our study begun in \cite{CNP} where we outlined a general and mathematically sound theory of the Faraday effect. The theoretical investigation of this effect -which amounts to the study of the transverse conductivity  of the optical sample- has a long and distinguished history in the quantum theory of solids. While sending the reader to \cite{CNP} for a detailed discussion and more references, we only mention here that the first general theory of the conductivity tensor for independent electrons subjected to a periodic electric potential and to a constant magnetic field in the linear response approximation has been developed by  L. Roth \cite{R1}. Roth's method is based on an effective Hamiltonian approach for Bloch electrons in a weak magnetic field (see \cite{R2} and references therein). But due to the difficulties generated by the linear growth at infinity of the magnetic vector potential, the theory is highly formal. Basic questions as for example computation of higher order corrections in the magnetic field strength, the thermodynamic limit as well as some convergence issues, seem to be almost hopeless to cope with. Accordingly, the obtained formulae -although considered as a landmark of the subject- still contain singular terms at the crossings of the Bloch bands and only met a moderate success at the practical level.

In the first paper of this series we outlined a rigorous theory of the Faraday effect free of all these difficulties. The first basic idea is to employ a method going back at least to Sondheimer and Wilson \cite{SoWi} in order to express the traces involved in computing various physical quantities  as integrals using 
Green's functions, which are easier to control than the eigenfunctions and eigenvalues. The second and the main tool is the use of gauge covariance via a regularized magnetic perturbation theory \cite{BC, Cor1, CN, CN22, CN2, IMP, MP1, MP2, MP3, adn}. The result is a well defined, gauge invariant, workable formula for the transverse conductivity tensor in the thermodynamic limit. This tensor has an asymptotic expansion in the strength of the magnetic field, whose first two coefficients have been explicitly written down in terms of the zero field Green function, which in turn can be expressed in terms of zero field Bloch energies and eigenfunctions (see Section VII in \cite{CNP}). 

The subsequent papers of our series are devoted to some more delicate mathematical aspects of the program left open in \cite{CNP}.
In particular,  in the second paper of the series \cite{CN2} we solved -by making use of the full power of the regularized magnetic perturbation theory- the most difficult part of the program: the proof of the existence of the thermodynamic limit.

The present paper is devoted to another convergence issue left open in \cite{CNP}. More precisely, when expressing the relevant quantities in terms of Bloch eigenfunctions and eigenvalues one arrives, by formal computations, at expressions containing multiple sums over the band indices as well as integrals over Brillouin and unit cells. The point is that the order of summation is crucial since the sums might not be absolutely convergent. The problem is more than an academic one since in practical computations finite bands models are used which amounts to truncate the infinite sums so that the question whether the sums are absolutely convergent is a crucial one. The main result of this paper is that the smoothness of the periodic potential $V$ guarantees that such series are always absolutely convergent. The smoothness of $V$ is not superfluous as shown by an example. Since the same kind of  series appear very oftenly in solid state physics (e.g. when computing derivatives of dispersion laws) and are related to sum rules concerning various observables, we present our main technical result  Theorem \ref{teorema1} in a general setting .

The content of the paper is as follows.  Section 2 contains the setting, examples of series we are interested in as well as the  related sum rules. In Section 3 we give our main technical result giving the decay (under suitable smoothness condition of $V$) of the matrix elements of the momentum operator in the Bloch basis as the band index increases. Section 4 contains applications to series appearing in the theory of the Faraday effect. Finally in Section 5 we present a computation involving the (singular)  one dimensional $\delta$-"potential", in order to point out the need of smoothness conditions on $V$. The question to optimally relate the decay properties of the momentum matrix elements in the Bloch basis to the degree of smoothness of $V$ is left as an interesting open problem.

\section{Preliminaries and motivation} \label{sectiuneadoi}

The one particle Hilbert
space for a $d$-dimensional spinless particle is
$
\mathcal{H}_{\infty}=L^2(\R^d)$,  $d\geq 1$.
The Hamiltonian describing its dynamics is:
\begin{equation}\label{hamimare}
H_\infty={\bf P}^2 +V,\quad {\bf P}=-i\nabla,
\end{equation}
where we assume that $V$ is 
$\epsilon$-relatively bounded with respect to $-\Delta$,
and periodic with respect to a Bravais lattice $\mathcal{L}$. 
Denote by $\Omega$ its elementary cell and by 
$\Omega^*$ the corresponding Brillouin zone.  

We assume for simplicity that the lattice is
$\mathbb{Z}^d$, $d\geq 1$, and the unit cell $\Omega$ is the unit
cube centred at the origin. The Bloch-Floquet unitary is 
(see e.g. \cite{RS4}): 
\begin{align}\label{adoua1}
&U:L^2(\R^d)\mapsto \int_{\Omega^*}^\oplus L^2(\Omega)d\bk, \nonumber \\
&(Uf)(\underline{x},\bk):=\frac{1}{(2\pi)^{d/2}}\sum_{\gamma\in
  \mathbb{Z}^d} e^{-i\bk\cdot (\underline{x}+\gamma)}f(\underline{x}+\gamma),\quad \bk\in
  \Omega^*,\quad \underline{x}\in \Omega,\quad f\in C_0^\infty(\R^d).
\end{align}
Then $H_\infty$ can be written as a fiber integral 
$H_\infty=\int_{\Omega^*}^\oplus h(\bk)d\bk$ where the fiber operator 
\begin{align} \label{fibHa}
h(\bk) &= \left (-i\nabla_p +\bk\right )^2
+V,
\end{align}
is defined in  $L^2(\Omega)$ with periodic boundary
conditions, and its domain is the periodic 
Sobolev space $H^2(\Omega_p)$. Here $-i\nabla_p$ denotes the momentum
operator in $L^2(\Omega)$ with periodic boundary conditions. The
operator $h(\bk)$ has compact resolvent and purely
discrete spectrum $\{\lambda_j(\bk)\}_{j\geq 1}$. We can choose a set
of eigenfunctions $\{u_j(\cdot,\bk)\}_{j\geq 1}$ 
which form an orthonormal basis of $L^2(\Omega)$ and obey the equation: 
\begin{equation}
h(\bk)u_j(\cdot,\bk)=\lambda_{j}(\bk)u_j(\cdot,\bk).
\end{equation}
We label $\lambda_{j}(\bk)$ in increasing order. We
have to remember that due to this choice, $\lambda_{j}(\bk)$ and
 $u_j(\x,\bk)$ are not differentiable with respect to 
$\bk$ at crossing points. Without loss of generality, we may also
assume that $h(\bk)\geq 1$ for all $\bk$ which implies:
\begin{equation}\label{sept1}
\lambda_{j}(\bk)\geq 1,\quad \bk\in
  \Omega^*,\quad j\geq 1.
\end{equation}

One can introduce the Bloch functions:
\begin{equation}\label{blochf}
\Psi_j(\x,\bk):=\frac{e^{i \bk\cdot\x}}{(2\pi)^{\frac{d}{2}}}
u_j(\x,\bk),\quad \x\in\R^d
\end{equation}
which form  a basis of generalized eigenfunctions of $H_\infty$ in $L^2(\R^d)$,
i.e. in distributional sense we have:
$$\int_{\Omega^*}\sum_{j\geq 1}\Psi_j(\x,\bk)\overline{\Psi_j}(\y,\bk)d\bk =\delta(\x-\y).$$ 
Using Bloch functions, one can express the integral kernel of any
function of $H_\infty$ in terms of the fiber $h(\bk)$. For example, the Green 
function (i.e. the integral kernel of the resolvent) writes as:
 \begin{equation}\label{intker}
G_\infty(\x,\y; z)=\int_{\Omega^*}\sum_{j\geq 1}
\frac{\Psi_j(\x,\bk) \overline{\Psi_{j}(\y,\bk)} }
{\lambda_j(\bk)-z}d\bk, 
\end{equation}
The above formula has to be understood in the distributional sense 
since the series on
the right hand side is typically not absolutely convergent. 
Now let us present the problems we want to consider.

\subsection{Perturbation theory and sum rules}

 By writing 
$$h(\bk)=h(\bk_0)+2(\bk-\bk_0)\cdot [-i\nabla_p
+\bk_0]+(\bk-\bk_0)^2=:h(\bk_0)+W,$$
we see that $W$ is a regular perturbation for $h(\bk_0)$. Assume that
$\lambda_1(\bk_0)$ is non-degenerate. Then according to the analytic
perturbation theory, $\lambda_1(\bk)$ remains non-degenerate in a neighborhood of $\bk_0$. Keeping
$|\bk-\bk_0|$ 
small enough and using the Feshbach formula with the projection 
$\Pi_0=|u_1(\cdot,\bk_0)\rangle \langle u_1(\cdot,\bk_0)|$ we obtain:
\begin{align}\label{prima3}
\lambda(\bk)&=\lambda(\bk_0)+\langle u_1(\cdot,\bk_0),W u_1(\cdot,\bk_0)\rangle\nonumber \\ 
&-\langle W u_1(\cdot,\bk_0),\left \{\Pi_0^{\perp} 
[h(\bk)-\lambda(\bk)]\Pi_0^{\perp}\right\}^{-1}W u_1(\cdot,\bk_0)\rangle.
\end{align}
By iterating the above formula we can identify in an efficient way the full Taylor
expansion of $\lambda_1$ around $\bk_0$. Let us first introduce the notation:
\begin{align}\label{kerl3}
\hat{\pi}_{ij}(\alpha,\bk):=\langle u_i(\cdot,\bk),(-i\partial_\alpha+
k_\alpha)u_j(\cdot,\bk)\rangle_{L^2(\Omega)} =\int_{\Omega}u_i(\x,\bk) 
\overline{[(-i\partial_\alpha+ k_\alpha) u_j]}(\x,\bk) d\x.
\end{align}

If we are interested for
example in $\frac{\partial^4\lambda_1}{\partial k_\alpha^4}$, one of
the terms building it will be proportional with: 
\begin{equation}\label{apatra2}
\sum_{j_1\geq 2}\hat{\pi}_{1j_1}(\alpha,\bk)\sum_{j_2\geq 2}\frac{\hat{\pi}_{j_1j_2}(\alpha,\bk)}{\lambda_{j_2}-\lambda_1}
\sum_{j_3\geq
  2}\frac{\hat{\pi}_{j_2j_3}(\alpha,\bk)}{\lambda_{j_3}-\lambda_1}\hat{\pi}_{j_31}(\alpha,\bk),
\end{equation}
where the order of the sums is crucial. Each series generates 
an $l^2(\mathbb{N})$-summable vector for the next one. The multiple series is
convergent, but not apriori absolutely convergent. This is because the 
inner $\hat{\pi}$'s can grow in absolute value with the energy. But we
will prove in 
this paper that such series are always absolutely
convergent if $V$ is smooth. 

Before giving some more precise results, let us make the connection
with the notion of sum rules and argue why the above matrix elements
may have a rapid decay with the energy if one index is kept fixed. We
work in $L^2(\Omega)$ and let us omit $\bk$ in order to simplify
notation. If $A$ is an operator which is relatively bounded to $h$,
then we can write the 
identity ($t\geq 0$):
\begin{equation}\label{apatra1}
 i\langle [A,e^{ith}Ae^{-ith}]u_m,u_m\rangle =2\sum_{n\geq 1}|\langle
 Au_m,u_n\rangle|^2 
\sin\{t(\lambda_m-\lambda_n)\}. 
\end{equation}
The commutator on the left hand side is well defined as a quadratic
form on functions belonging to the domain of $h$. Now if we assume that 
$hAh^{-N}$ is bounded for some $N\geq 1$, then we obtain the bound 
$$  |\langle Au_m,u_n\rangle|= \frac{\lambda_m^{N}}{\lambda_n}
|\langle hAh^{-N}u_m,u_n\rangle|$$
which when applied to \eqref{apatra1} would insure that the series
defines a 
differentiable function at all $t$. 
If in addition we assume that we can make sense out of the double
commutator $[A,[h,A]]$, 
then a typical sum rule would be:
\begin{equation}\label{sumrule}
 \langle [A,[h,A]]u_m,u_m\rangle =2\sum_{n\geq 1}|\langle
 Au_m,u_n\rangle|^2 \cdot (\lambda_n-\lambda_m).
\end{equation} 
If $A$ does not commute well with $h$, it can happen that 
even if the function of 
$t$ in \eqref{apatra1} is continuous on $\R$, we cannot be sure that
it is also differentiable. 
We will later on give an example of a Schr\"odinger operator which
will make appear the Riemann function on the right hand side of
\eqref{sumrule}. This function is known 
to be everywhere continuous, but differentiable only at certain 
rational points not including $t=0$.

For many more rigorous aspects and applications of sum rules, see \cite{HS}.  

\subsection{Traces per unit volume}

Let us consider the Fermi-Dirac distribution function:
\begin{equation}\label{fermidira}
f_{FD}(x):=\frac{1}{e^{\beta(x-\mu)}+1},\quad x\in \R, \beta >0,\mu\in
\R, 
\end{equation}
and choose the following path in the complex plane
\begin{equation}\label{contur1}
\Gamma=\left \{x\pm i\delta :\;-1\leq x
  <\infty
\right \}\bigcup 
\left \{-1+i y:\:-\delta\leq y \leq
\delta \right \} 
\end{equation}
with a fixed $\delta\in (0,\frac{\pi}{2\beta}]$. Introduce the notation:
\begin{equation}\label{prima1}
R_\infty(z):=(H_\infty-z)^{-1}.
\end{equation}

The quantity we are interested in this time is a trace-per-unit-volume
(provided it exists):
\begin{align}\label{stone5}
I_{\alpha_1,\dots , \alpha_n}:=\frac{1}{|\Omega|}\int_{\Omega}d\x\left \{
\int_{\Gamma}dz
{f}_{FD}(z) 
P_{\alpha_1}R_\infty(z)P_{\alpha_2}R_\infty(z) \dots P_{\alpha_n}R_\infty(z)\right \}(\x,\x).
\end{align}
We want to express the above quantity only with the help of Bloch
functions and energies. The fact that the operator 
\begin{equation}\label{adoua3}
\int_{\Gamma} 
\left \{P_{\alpha_1}R_\infty(z)P_{\alpha_2}R_\infty(z) \dots
P_{\alpha_n}R_\infty(z)\right \}{f}_{FD}(z) dz
\end{equation}
has a jointly continuous integral kernel if $V$ is smooth, 
can be proved with the same
methods as in for example \cite{CN2}, even in the presence of constant
magnetic fields and without any periodicity condition on $V$. 
In the periodic case the situation is somewhat simpler, as we will 
see in the next section.

Now let us use \eqref{intker} and \eqref{blochf} in \eqref{stone5} and 
perform formal computations using the completeness of Bloch functions
and freely interchanging  the order of various integrals and series. 
We would arrive at an expression as follows: 

\begin{align}\label{prima2}
&I_{\alpha_1,\dots , \alpha_n} =\frac{1}{(2\pi)^d}\int_{\Omega^*}d\bk \\
&\left \{\sum_{j_1,\dots,j_n\geq 1}\hat{\pi}_{j_1j_2}(\alpha_1,\bk)\dots \hat{\pi}_{j_nj_1}(\alpha_n,\bk)
\int_{\Gamma}
 \frac{{f}_{FD}(z)}{(\lambda_{j_1}(\bk)-z)\dots (\lambda_{j_n}(\bk)-z)}dz\right\}.\nonumber 
\end{align}
The first question is why is \eqref{prima2}
convergent? The answer is not obvious. Moreover, if we look at 
$\hat{\pi}_{ij}(\alpha,\bk)$ alone (assume without loss that
$\lambda_j\geq \lambda_i$), we see that using Cauchy's inequality we
obtain that it could grow like $||(-i\partial_\alpha+ k_\alpha)
u_i||\sim \sqrt{\lambda_i}$, which is not very encouraging. For 
some more physical background on 
such problems, see \cite{PC}.

\section{A technical result}

\begin{theorem}\label{teorema1} Let $M\geq 1$ be an integer and assume
  that $V\in C^{2M-1}(\R^d)$ is a $\Omega$-periodic potential with
  $V\geq 1$. Then 
for every integer $0\leq N\leq M$ there exists a constant $C_N$ such that
uniformly in $\bk\in \Omega^*$, $\alpha\in\{1,\dots,d\}$ and $s,t\geq 1$ we have the estimate:
\begin{equation}\label{prima4}
 |\hat{\pi}_{st}(\alpha,\bk)|=|\hat{\pi}_{ts}(\alpha,\bk)|\leq
 C_N\frac{\lambda_s^{N+\frac{1}{2}}}{\lambda_t^{N}}.
\end{equation}
\end{theorem}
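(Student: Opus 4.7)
The plan is to move $N$ factors of $h(\bk)$ from the $u_t$ slot of $\hat{\pi}_{st}$ onto the $u_s$ slot, and then to bound the resulting higher regularity norm of $u_s$ via elliptic theory. Denote $\pi_\alpha:=-i\partial_\alpha+k_\alpha$, which is self-adjoint on $L^2(\Omega)$ with periodic boundary conditions. Self-adjointness of $\pi_\alpha$ immediately yields $\hat{\pi}_{ts}=\overline{\hat{\pi}_{st}}$, hence the claimed equality of absolute values. Since $h(\bk)^{N}u_t=\lambda_t^{N}u_t$, using also self-adjointness of $h(\bk)$, I rewrite
$$\hat{\pi}_{st}(\alpha,\bk)=\frac{1}{\lambda_t^{N}}\,\langle u_s,\pi_\alpha\,h(\bk)^{N}u_t\rangle=\frac{1}{\lambda_t^{N}}\,\bigl\langle h(\bk)^{N}\pi_\alpha u_s,\,u_t\bigr\rangle.$$
A Cauchy--Schwarz estimate then reduces the theorem to the uniform bound
$$\|h(\bk)^{N}\pi_\alpha u_s(\cdot,\bk)\|_{L^2(\Omega)}\leq C_N\,\lambda_s^{N+\tfrac{1}{2}}.$$

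Expanding $h^N \pi_\alpha$ by iterating the commutator identity $[h(\bk),\pi_\alpha]=[V,\pi_\alpha]=i(\partial_\alpha V)$ gives
$$h^N\pi_\alpha = \pi_\alpha h^N + i\sum_{j=0}^{N-1} h^{\,j}\,(\partial_\alpha V)\,h^{N-1-j}.$$
Applied to $u_s$, the first term contributes $\lambda_s^N \pi_\alpha u_s$ whose $L^2$-norm is bounded by $\lambda_s^N\sqrt{\langle u_s,\pi_\alpha^2 u_s\rangle}\leq \lambda_s^N\sqrt{\langle u_s,h(\bk) u_s\rangle}=\lambda_s^{N+\frac{1}{2}}$, using $h(\bk)-\pi_\alpha^{2}=\sum_{\beta\neq\alpha}\pi_\beta^{2}+V\geq 0$; this is exactly the leading order sought. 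Each commutator summand evaluates to $\lambda_s^{N-1-j}h^{\,j}\bigl((\partial_\alpha V)u_s\bigr)$, so it remains to prove
$$\|h(\bk)^{j}\bigl((\partial_\alpha V)\,u_s(\cdot,\bk)\bigr)\|_{L^2}\leq C_j\,\lambda_s^{\,j},\qquad 0\leq j\leq N-1,$$
which is of strictly lower order ($\lambda_s^{N-1}$) and therefore harmless.

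This last estimate is the one genuine use of the smoothness hypothesis. A standard elliptic-regularity bootstrap, run uniformly in $\bk\in\Omega^*$, gives the norm equivalence
$$c_j\,\|\varphi\|_{H^{2j}(\Omega_p)}\leq \|(h(\bk)+1)^{j}\varphi\|_{L^2(\Omega)}\leq C_j\,\|\varphi\|_{H^{2j}(\Omega_p)},$$
valid once $V\in C^{2j-1}$. Taking $\varphi=u_s$ delivers the a priori bound $\|u_s(\cdot,\bk)\|_{H^{2j}(\Omega_p)}\leq C_j\lambda_s^{j}$, while a Leibniz-type product estimate gives $\|(\partial_\alpha V)u_s\|_{H^{2j}}\leq C\,\|V\|_{C^{2j+1}(\Omega)}\|u_s\|_{H^{2j}}$; a second invocation of the equivalence then produces the required $\|h^j((\partial_\alpha V)u_s)\|_{L^2}\leq C_j\lambda_s^{j}$. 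Since $j$ runs only up to $N-1\leq M-1$, the highest derivative of $V$ ever invoked is of order $2(N-1)+1=2N-1\leq 2M-1$, which is precisely the hypothesis.

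The technical heart of the argument is therefore the uniform-in-$\bk$ elliptic bootstrap: one must verify that the constants $c_j, C_j$ can be chosen independently of $\bk\in\Omega^*$ and depend only on the $C^{2M-1}$-norm of $V$, and that the lower-order corrections generated at each step of the bootstrap introduce no additional powers of $\lambda_s$ beyond $\lambda_s^j$. Once this regularity package is in place, the rest of the proof is the purely algebraic commutator manipulation described above.
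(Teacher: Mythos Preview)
Your argument is correct and closely related to the paper's, but the technical execution is genuinely different. Both proofs hinge on commuting powers of $h$ past $\pi_\alpha$: the paper works with negative powers, rewriting $\hat{\pi}_{ts}\lambda_t^N/\lambda_s^N=\hat{\pi}_{ts}+\langle u_t,h^N[p_\alpha,h^{-N}]u_s\rangle$ and then proving the operator bound $\|h^N[p_\alpha,h^{-N}]\|<\infty$ via a simultaneous induction (their Lemma~3.2) which tracks both $h^N[p_\alpha,h^{-N}]$ and $h^N\phi h^{-N}$ at once, entirely at the bounded-operator level and without ever naming Sobolev spaces. You instead expand $h^N\pi_\alpha$ with positive powers, apply it to the eigenvector $u_s$, and reduce everything to the single bound $\|h^j((\partial_\alpha V)u_s)\|\leq C_j\lambda_s^j$, which you obtain from the norm equivalence $\|(h+1)^j\cdot\|\sim\|\cdot\|_{H^{2j}}$ and a Leibniz product estimate.

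Your route is more direct: it avoids the paper's intertwined two-statement induction and the Jacobi-identity manoeuvre in their equation (3.11)--(3.12), at the cost of invoking elliptic regularity as a black box. The paper's route is more self-contained (no external Sobolev machinery) and yields a slightly stronger statement, namely a genuine operator-norm bound rather than a bound on a specific vector. Both approaches use exactly the $C^{2M-1}$ hypothesis; your derivative count $2(N-1)+1\leq 2M-1$ matches theirs. The one point you should make explicit is that moving $h^N$ across in $\langle u_s,\pi_\alpha h^N u_t\rangle=\langle h^N\pi_\alpha u_s,u_t\rangle$ requires $\pi_\alpha u_s\in\mathrm{Dom}(h^N)$, which follows from the same elliptic bootstrap once $V\in C^{2N-1}$.
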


\begin{proof}
During this proof we drop the $\bk$
dependence of the various quantities. We denote
$-i\partial_\alpha$ with $p_\alpha$. 
The eigenfunctions $u_j$'s are a-priori in $H^2(\Omega_p)$ and
$\Omega$-periodic functions. The lower bound on $V$ insures that
$h\geq 1$. The idea of the proof is based on
the following identity:
\begin{equation}\label{prima5}
 \hat{\pi}_{ts}(\alpha)\frac{\lambda_t^{N}}{\lambda_s^{N}}=\langle
h^{N}u_t, (p_\alpha +k_\alpha)h^{-N}u_s\rangle\hat{\pi}_{ts}(\alpha)+\langle u_t, h^N[p_\alpha,h^{-N}]u_s\rangle,
\end{equation}
assuming for the moment that the commutator on the right maps $L^2$ into the domain of $h^N$. 

We know that $p_\alpha h^{-1/2}$ is bounded, thus we can write: 
\begin{equation}\label{prima6}
 |\hat{\pi}_{ts}(\alpha)|\leq C \lambda_s^{1/2}.
\end{equation}
Hence \eqref{prima4} would follow if we can prove the estimate
$$||h^N[p_\alpha,h^{-N}]||\leq C_N.$$

\subsection{An induction argument}
\begin{lemma}\label{lema1}
Choose any $\phi\in C^{2M}(\Omega_p)$. Then  
the operators $ h^N[p_\alpha,h^{-N}]$ and $h^N\phi h^{-N} $ 
are bounded for any $1\leq N\leq M$.
\end{lemma}

\begin{proof}
We give a proof by induction in $M$. Let us start with $N=M=1$. Then we have
$$h \phi h^{-1}=\phi +\{-2i (\nabla\phi)\cdot (\p +\bk)-(\Delta \phi)\}h^{-1}$$
and 
\begin{equation}\label{prima7}
h[p_\alpha,h^{-1}]=[h,p_\alpha]h^{-1}=i(\partial_\alpha V) h^{-1}
\end{equation}
and both are clearly bounded. Note the important thing that we only lost two derivatives of $\phi$ in the first relation. 

Now take $M\geq 2$. The induction assumption is the following: for any $N\leq M-1$, and for any function $\tilde{\phi}\in C^{2M-2}(\Omega_p)$ we have 
\begin{equation}\label{asasea1}
||h^{N}[p_\alpha,h^{-N}]||\leq C_N,\quad ||h^N\tilde{\phi} h^{-N}||\leq C_{N,\tilde{\phi}},\quad 1\leq N\leq M-1.
\end{equation}
Note that the induction hypothesis allows us to conclude that 
\begin{equation}\label{prima8}
h^N[\phi,h^{-N}]=h^N\phi h^{-N}-\phi,\quad N\leq M-1,
\end{equation}
is also bounded. 

We now have to prove that we can allow $N=M$. 
The main idea is to rewrite both $h^{M}[p_\alpha,h^{-M}]$ and $ h^{M}\phi
h^{-M} $ as linear combinations of similar terms but of lower order. We are compelled 
to perform the induction step simultaneously, because the reduction
step mixes the two types of operators.  

We start with an identity:
\begin{equation}\label{prima9}
[p_\alpha,h^{-M}]=h^{-1}[p_\alpha,h^{-M+1}]+[p_\alpha,h^{-1}]h^{-M+1}=h^{-1}[p_\alpha,h^{-M+1}]+i h^{-1}(\partial_\alpha V) h^{-M},
\end{equation}
where in the second inequality we used \eqref{prima7}. Multiplying on
the left by $h^{M}$ we obtain:
\begin{equation}\label{prima10}
h^{M}[p_\alpha,h^{-M}]=h^{M-1}[p_\alpha,h^{-M+1}]+i h^{M-1}(\partial_\alpha V) h^{-M},
\end{equation}
Here the main point is that $\partial_\alpha V$ belongs to $C^{2M-2}(\Omega_p)$, 
thus $h^{M-1}(\partial_\alpha V) h^{-M+1}$ must be bounded according to \eqref{asasea1}. 

Now let us deal with the other type of operator. We write another
identity:
\begin{align}\label{prima100}
\phi h^{-M}&=h^{-M}\phi +[\phi,h^{-M}]=h^{-M}\phi
+h^{-M+1}[\phi,h^{-1}]+ [\phi,h^{-M+1}]h^{-1}\nonumber \\
&=h^{-M}\phi +h^{-M+1}[\phi,h^{-1}]+h^{-1}[\phi,h^{-M+1}]-[h^{-1},[\phi,h^{-M+1}]].
\end{align}
The first three operators in the last equality have their images in the domain of $h^{M}$ due to the
induction hypothesis \eqref{prima8}. For the fourth term we apply the
Jacobi identity for the double commutator and obtain:
\begin{align}\label{prima11}
[h^{-1},[\phi,h^{-M+1}]]&=[h^{-M+1},[\phi,h^{-1}]]=[h^{-M+1},h^{-1}[h,\phi]h^{-1}]\nonumber
\\
&=h^{-1}[h^{-M+1},[h,\phi]]h^{-1}.
\end{align}
The commutator $[h,\phi]$ can be easily put in the form 
$\sum_{\beta=1}^df_\beta p_\beta +g$, with $f_\beta$'s proportional with $\partial_\beta \phi$, while $g$ is containing up to second order partial derivatives of $\phi$. Thus:
\begin{align}\label{prima12}
& h^{-1}[h^{-M+1},[h,\phi]]h^{-1}=h^{-1}\sum_{\beta=1}^d[h^{-M+1},f_\beta
p_\beta]h^{-1}+h^{-1}[h^{-M+1},g]h^{-1}\nonumber \\
&=h^{-M}\left \{\sum_{\beta=1}^dh^{M-1}[h^{-M+1},f_\beta]
p_\beta h^{-1}+\sum_{\beta=1}^d(h^{M-1}f_\beta h^{-M+1})(h^{M-1}[h^{-M+1},p_\beta])h^{-1}\right\} \nonumber \\
&+h^{-M}\left \{h^{M-1}[h^{-M+1},g]h^{-1}\right \}.
\end{align}
The induction hypothesis will insure that the operator inside the large parenthesis is bounded, because our 
$f_\beta$'s and $g$ belong in particular to $C^{2M-2}$. Thus according
to \eqref{asasea1} and \eqref{prima8}we have that $h^{M-1}f_\beta h^{-M+1}$ and $h^{M-1}[h^{-M+1},g]$ are bounded operators. This concludes the proof.
\end{proof}

\noindent{\bf Remark}. Let us show that \eqref{prima4} is sharp in the
free case ($V=0$) and for $t=s$. (Choose the lattice to be
$\mathbb{Z}^3$). We have (due to the Feynman-Hellman identity) that 
$2\hat{\pi}_{tt}(\alpha,\bk)=\partial_\alpha \lambda_t(\bk)$ at any
point $\bk$ where $\lambda_t(\bk)$ is non-degenerate, and
$\lambda_t=(2\pi{\bf m}+\bk)^2$ for some ${\bf m}\in
\mathbb{Z}^3$. Then $|\partial_\alpha \lambda_t(\bk)|\sim |{\bf
  m}|\sim \sqrt{\lambda_t}$ for large $t$. 

The theorem is definitely not sharp when $V=0$ and $t\neq s$, because in that case 
all off-diagonal matrix elements are identically zero. 
\end{proof}

\section{Applications}

In this section we solve the problems outlined in the
introduction, and start by considering traces per unit volume. In order to simplify the arguments, we choose $V\in C^\infty(\Omega_p)$. 

Assume that $A(\bk)$ is a family of bounded, integral operators on
$L^2(\Omega)$, with a jointly continuous integral kernel $
\mathcal{A}(\x,\x';\bk)$, kernel which can be extended to a jointly continuous and periodic
function on $\R^{2d}$. The operator $B:=U^*\; \int_{\Omega^*}^\oplus
A(\bk)d\bk \; U$ defined on $L^2(\R^d)$ has an integral kernel
$\mathcal{B}(\x,\x')$ given by (see e.g. \cite{BCZ}):
\begin{equation}\label{adoua2}
\mathcal{B}(\x,\x')=\frac{1}{(2\pi)^{d}}\int_{\Omega^*}
e^{i\bk\cdot (\x-\x')}\mathcal{A}(\x,\x';\bk) d\bk.
\end{equation}
We are interested in a $B$ operator as given in \eqref{adoua3}. Its
corresponding $A(\cdot)$ fiber family is:
\begin{align}\label{adoua4}
A(\bk)=\int_{\Gamma}dz
{f}_{FD}(z) 
(p_{\alpha_1}+k_{\alpha_1})(h(\bk)-z)^{-1} \dots
(p_{\alpha_n}+k_{\alpha_n})(h(\bk)-z)^{-1}. 
\end{align}

Now if we can prove that $A(\bk)$ has a jointly continuous integral
kernel, then 
\begin{equation}\label{adoua5}
I_{\alpha_1\dots
  \alpha_n}=\frac{1}{(2\pi)^{d}}\int_{\Omega^*}\int_{\Omega}\mathcal{A}(\x,\x;\bk) d\x \;d\bk.
\end{equation}

We can integrate by parts $M$ times with respect to $z$ in \eqref{adoua4}  using an exponentially decaying antiderivative $f_M$ 
of $f_{FD}$, and write $A(\bk)$ as a sum of operators of the following type:

\begin{align}\label{adoua6}
A_{m_1\dots m_n}(\bk)&:=\int_{\Gamma}dz
{f}_{M}(z) 
(p_{\alpha_1}+k_{\alpha_1})(h(\bk)-z)^{-m_1} \dots
(p_{\alpha_n}+k_{\alpha_n})(h(\bk)-z)^{-m_n}, \\
& m_1+\dots+m_n=n+M. \nonumber
\end{align}

Without loss, we may assume that $m_1\geq M/n$, and keep in mind that
$M$ 
will be chosen to be a large number. The integrand becomes trace class 
if $m_1-1$ is larger than some number $m_d\geq 1 $ depending on the
dimension $d$. The integral kernel of 
$A_{m_1\dots m_n}(\bk)$ can be written in distributional sense as:

\begin{align}\label{adoua7}
&\mathcal{A}_{m_1\dots m_n}(\x,\x';\bk)\\\nonumber  
&:=\int_{\Gamma}dz
{f}_{M}(z) 
\sum_{j_1\geq 1}u_{j_1}(\x,\bk)\frac{\hat{\pi}_{j_1j_2}(\alpha_1,\bk)}{(\lambda_{j_2}(\bk)-z)^{m_1}} \dots
\sum_{j_{n+1}\geq 1}\frac{\hat{\pi}_{j_{n}j_{n+1}}(\alpha_n,\bk)}{(\lambda_{j_{n+1}}(\bk)-z)^{m_n}} u_{j_{n+1}}(\x',\bk). 
\end{align}
If we can prove that the above multiple series and the integral are
absolutely convergent uniformly in $\x$, $\x'$ and $\bk$, then the
integral kernel would be a continuous function as a uniform limit of continuous functions. 

Due to standard results on elliptic regularity and Sobolev embedding
we can prove the existence of $\tilde{m}_d\geq 1$ such that 
\begin{align}\label{adoua8}
|u_{j}(\x,\bk)|\leq C ||h(\bk)^{\tilde{m}_d}u_{j}(\cdot,\bk)||=C \lambda_j(\bk)^{\tilde{m}_d},
\end{align}
uniformly in $j$ and $\bk$. 
Thus we can bound \eqref{adoua7} with:
\begin{align}\label{adoua9}
&|\mathcal{A}_{m_1\dots m_n}(\x,\x';\bk)|\\
&\leq C \int_{\Gamma}dz
|{f}_{M}(z)|
\sum_{j_1,\dots,j_{n+1}\geq 1}\lambda_{j_1}(\bk)^{\tilde{m}_d}\lambda_{j_{n+1}}(\bk)^{\tilde{m}_d}\frac{|\hat{\pi}_{j_1j_2}(\alpha_1,\bk)|}
{|\lambda_{j_2}(\bk)-z|^{m_1}} \dots \frac{|\hat{\pi}_{j_{n}j_{n+1}}(\alpha_n,\bk)|}{|\lambda_{j_{n+1}}(\bk)-z|^{m_n}}, \nonumber 
\end{align}
and we will now prove that the above quantity is finite. First, note
that $1/|\lambda_{j}(\bk)-z|$ is uniformly bounded in $j\geq 1$ and
$z\in \Gamma$. Second, using the triangle inequality and the previous
remark we have 
$$\frac{\lambda_{j}(\bk)^{m}}
{|\lambda_{j}(\bk)-z|^{m}} \leq C_{m} (1+|z|^2)^{m/2}$$
again uniformly in $j\geq 1$ and $z\in \Gamma$. 
Third, using the exponential decay
of $|f_M|$ and the previous estimate we have: 
\begin{align}\label{adoua10}
|{f}_{M}(z)| \frac{1}
{|\lambda_{j_2}(\bk)-z|^{m_1}} \leq C_{m_1}\lambda_{j_2}(\bk)^{-m_1}(1+|z|)^{-2}.
\end{align}
Replacing this estimate in the integral we obtain:
\begin{align}\label{adoua11}
&|\mathcal{A}_{m_1\dots m_n}(\x,\x';\bk)|\\\nonumber 
&\leq C \sum_{j_1,\dots,j_{n+1}\geq  1}\lambda_{j_2}(\bk)^{-m_1}
\lambda_{j_1}(\bk)^{\tilde{m}_d}\lambda_{j_{n+1}}(\bk)^{\tilde{m}_d}|\hat{\pi}_{j_1j_2}(\alpha_1,\bk)|
\dots
|\hat{\pi}_{j_{n}j_{n+1}}(\alpha_n,\bk)|. 
\end{align}
As mentioned before, there exists $m_d$ large enough such that 
$\sum_{j\geq 1}\lambda_j(\bk)^{-m_d}<\infty$ uniformly in 
$\bk$. Now it is time to use \eqref{prima4} in a certain recursive
way. First, we get rid of $j_1$ by using the estimate:
\begin{align}\label{adoua16}
|\hat{\pi}_{j_{1}j_{2}}(\alpha_{1},\bk)|&\leq C 
\frac{\lambda_{j_{2}}(\bk)^{m_d+\tilde{m}_d+1/2}}{\lambda_{j_{1}}(\bk)^{m_d+\tilde{m}_d}}.
\end{align}
Replacing this in \eqref{adoua11} will provide us with a convergent
series in $j_1$, the price being the apparition of a higher power in
$\lambda_{j_2}$. 

We can now get rid of all other indices different from $j_2$, starting
with $j_{n+1}$ and going down to $j_2$. For instance we write:
\begin{align}\label{adoua12}
|\hat{\pi}_{j_{n}j_{n+1}}(\alpha_n,\bk)|&\leq C 
\frac{\lambda_{j_{n}}(\bk)^{m_d+\tilde{m}_d+1/2}}{\lambda_{j_{n+1}}(\bk)^{m_d+\tilde{m}_d}},
\end{align}
which replaced in \eqref{adoua11} allows us to sum over $j_{n+1}$ and obtain the estimate:
\begin{align}\label{adoua13}
|\mathcal{A}_{m_1\dots m_n}(\x,\x';\bk)|&\leq C
\sum_{j_2,\dots,j_{n}\geq  1}
\lambda_{j_2}(\bk)^{-m_1+m_d+\tilde{m}_d+1/2}
\dots
\lambda_{j_n}(\bk)^{m_d+\tilde{m}_d+1/2}|\hat{\pi}_{j_{n-1}j_{n}}(\alpha_n,\bk)|. 
\end{align}
Then we can get rid of $j_n$ by writing:
\begin{align}\label{adoua14}
|\hat{\pi}_{j_{n-1}j_{n}}(\alpha_n,\bk)|&\leq C 
\frac{\lambda_{j_{n-1}}(\bk)^{r_1+1/2}}{\lambda_{j_{n}}(\bk)^{r_1}},\quad r_1\geq 2m_d+\tilde{m}_d+1/2,
\end{align}
and then summing. Next comes $j_{n-1}$: 
\begin{align}\label{adoua15}
|\hat{\pi}_{j_{n-2}j_{n-1}}(\alpha_{n-1},\bk)|&\leq C 
\frac{\lambda_{j_{n-2}}(\bk)^{r_2+1/2}}{\lambda_{j_{n-1}}(\bk)^{r_2}},\quad r_2\geq m_d+r_1+1/2,
\end{align}
and so on. Thus we can recursively go down to $j_2$, generating a 
positive but finite power of $\lambda_{j_2}$. In the end we make use
of $m_1$ (in fact $M$) in order to make the series in $j_2$ convergent. 

We have therefore shown that the trace class operator in \eqref{adoua6} has a jointly continuous integral kernel. Moreover, 
we now can apply Fubini's theorem and express its trace as: 

\begin{align}\label{atreia1}
&\frac{1}{(2\pi)^d}\int_{\Omega^*}d\bk \\
&\left \{\sum_{j_1,\dots,j_n\geq 1}\hat{\pi}_{j_1j_2}(\alpha_1,\bk)\dots \hat{\pi}_{j_nj_1}(\alpha_n,\bk)
\int_{\Gamma}
 \frac{{f}_{M}(z)}{(\lambda_{j_1}(\bk)-z)^{m_1}\dots (\lambda_{j_n}(\bk)-z)^{m_n}}dz\right\}.\nonumber 
\end{align}

Now let us observe that the $z$ integral can be computed through
residue calculus, and each resulting term will contain a factor of the
type 
$f_M^{(t)}(\lambda_{j_s})$ with $1\leq t\leq M$ and $1\leq s\leq
n$. Each such factor decays exponentially. 
This allows us to prove that each such term converges absolutely, by
using the exponential decay of $f_M$ in order to control the growth of the
momentum matrix elements. As this point we can put together all the
$z$ integrals coming from all operators as the one in
\eqref{adoua6}. Then integrals with respect to $z$ will add up and  
form the integral 
$$\int_{\Gamma}
 \frac{{f}_{FD}(z)}{(\lambda_{j_1}(\bk)-z)\dots
   (\lambda_{j_n}(\bk)-z)}dz $$
which will still generate exponentially decaying factors like
$f_{FD}^{(t)}(\lambda_{j_s})$, where $1\leq t\leq n$ and $1\leq s\leq n$. 
Thus we obtain the final formula announced in \eqref{prima2}.  

Now let us come back to sum rules and derivatives of
eigenvalues. While the convergence of \eqref{sumrule} is a direct consequence of
\eqref{prima4}, the absolute convergence of \eqref{apatra2} can be
proved with the same trade-off method as the one we used for the trace
problem. 

\section{An open problem and an example}

An interesting open problem is to relate the decay of our matrix
elements with the regularity of $V$. According to Theorem \ref{teorema1}, if $V$ is 
not infinitely differentiable but has a number of continuous
derivatives, then one can still prove some decay in energy for the
matrix elements. The question is whether one can
describe the exact decay rate of such a matrix element in 
case of singular potentials. 

Let us investigate a singular ``potential'' for which one can give an
exact asymptotic
expression for $\hat{\pi}_{1j}$ when $j\to \infty$. Let us consider
the one dimensional operator $h:=-\frac{d^2}{dx^2}+g\delta (x)$, $g>0$,
defined through its quadratic form in 
$L^2([-\frac{1}{2},\frac{1}{2}])$, with periodic boundary
conditions. Its domain includes the periodic Sobolev space
$H^1((-\frac{1}{2},\frac{1}{2}]_p)$ where the respective interval is
identified with the $1d$-torus.

The operator $h$ commutes with the inversion operator. 
The delta potential does not act on the odd subspace, 
and the odd eigenfunctions 
are $u_j(x)=\sqrt{2} \sin(2j\pi x)$, $j\geq 1$, corresponding to
eigenvalues $\lambda_j=4\pi^2 j^2$. On the even subspace, the
eigenfunctions must be of the form:
\begin{equation}\label{acincea1}
\tilde{u}_j(x)=C_j\{\cos(\beta_jx)+\frac{g}{\beta_j}\sin(\beta_jx)\},\quad
j\geq 1,\quad x\in [0,1/2],
\end{equation}
where $C_j$ is a normalization constant and
$\tilde{u}_j(x):=\tilde{u}_j(-x)$ for $x\in (-1/2,0)$. Such a function
cannot belong to $H^2((-\frac{1}{2},\frac{1}{2}]_p\setminus\{0\})$ unless
$\tilde{u}'_j\left (\frac{1}{2}_-\right )=0$. This provides a quantization 
condition for $\beta_j>0$,
which must solve the equation 
\begin{equation}\label{acincea2}
\beta \tan(\beta/2)=g,\quad \beta\in (2\pi(j-1),2\pi j),\quad j\geq 1.
\end{equation}
The differences $\beta_j-2\pi(j-1)$ and $C_j-\sqrt{2}$ go to zero when $j\to\infty$. 

Now let us consider the scalar product $\hat{\pi}_j:=i\langle \tilde{u}_1,
u_j'\rangle=-i\langle \tilde{u}_1',
u_j\rangle$ and investigate its behavior when $j$ grows. Due to
symmetry considerations we have:
\begin{align}\label{sept2}
\hat{\pi}_j &=-i2\sqrt{2}
j\int_0^{1/2}\tilde{u}_1'(x)\sin(2j\pi
x)dx=-i\frac{\sqrt{2}}{j\pi}\tilde{u}_1'(0_+)-i\frac{\sqrt{2}}{j\pi}
\int_0^{1/2}\tilde{u}_1''(x)\cos(2j\pi
x)dx\nonumber \\
&=i\frac{C_1 g \sqrt{2}}{j\pi} +i\frac{\sqrt{2}}{2j^2\pi^2}
\int_0^{1/2}\tilde{u}_1'''(x)\sin(2j\pi
x)dx\nonumber \\
&=i\frac{C_1 g \sqrt{2}}{j\pi}
+i\frac{\sqrt{2}}{4j^3\pi^3}\{\tilde{u}_1'''(0_+)-\tilde{u}_1'''(1/2)\cos(\pi
j)\}+\mathcal{O}(1/j^3)\nonumber \\
&=i\frac{C_1 g \sqrt{2}}{j\pi}+\mathcal{O}(1/j^3).
\end{align}

Now if we replace in \eqref{apatra1} the operator $A$ with $p:=-id/dx$,
$u_m$ with $\tilde{u}_1$, and $u_n$ with $u_j$, we obtain: 
\begin{equation}\label{acincea3}
 i\langle [p,e^{ith}pe^{-ith}]\tilde{u}_1,\tilde{u}_1\rangle =C\sum_{j\geq 1}\frac{\sin\{t4\pi^2j^2\}}{j^2} +F(t), 
\end{equation}
where $C$ is a nonzero constant and $F(t)$ is some differentiable
function at $t=0$. On the right hand side we recognize the Riemann
function, which in particular is not differentiable at $t=0$. Therefore
a sum rule like in \eqref{sumrule} cannot hold. Nevertheless, the
function in \eqref{acincea3} is H\"older continuous at $t=0$ of order
$\alpha\in [0,1/2)$.

\section{Acknowledgments}

This paper is dedicated to our late friend Pierre Duclos. 
Part of this work was carried out while G.N. was visiting professor at 
Department of Mathematical Sciences, Aalborg University. 
H.C. acknowledges support from the Danish FNU grant 
{\it Mathematical Physics}.


\begin{thebibliography}{}





\bibitem{BC}  {Ph. Briet, H.D. Cornean.} {  Locating the spectrum for
 magnetic Schr\"odinger and Dirac operators.}
{\it Comm. Partial
 Differential Equations} {\bf 27} (5-6), 1079--1101 (2002)




\bibitem{BCZ}  Briet, P., Cornean, H.D., Zagrebnov, V.A.: 
"Do Bosons Condense in a Homogeneous Magnetic
Field?", Journal of Statistical Physics {\bf 116} (5/6), 1545-1578 (2004)

\bibitem{Cor1} Cornean, H.D.:
\newblock 
On the magnetization of a charged Bose gas in the canonical ensemble. 
\newblock {\it Commun. Math. Phys.} 212 (1), 1-27  (2000)



\bibitem{CN} Cornean, H.D., Nenciu, G.: 
\newblock On eigenfunction decay for two dimensional magnetic Schr\"odinger operators.
\newblock {\it Commun. Math. Phys.} {\bf 192}, 671-685 (1998)



\bibitem{CN22} Cornean, H.D., Nenciu, G.: 
\newblock  Two-dimensional magnetic
  Schr\"odinger operators: width of mini bands in the tight binding approximation.
 Ann. Henri Poincar{\'e}  {\bf 1} (2)  (2000), p.~203-222.



\bibitem{CNP}  Cornean, H.D., Nenciu, G., Pedersen, T. G.: 
The Faraday effect revisited: general theory.  {\it J. Math. Phys.}  
{\bf 47} (1), 013511 (2006)

\bibitem{CN2} Cornean, H.D., and Nenciu, G.: Faraday rotation
  revisited: Thermodynamic limit.  {\it J. Funct. Anal.}  {\bf 257} (7), 2024-2066 (2009) 


\bibitem{HS} Harrell, E. M., II; Stubbe, J.: 
Trace identities for commutators, with applications to the distribution of eigenvalues. {\it Trans. Amer. Math. Soc.}, 
in press. 


\bibitem{IMP}Iftimie, V., M\u antoiu, M., Purice, R.: 
Magnetic Pseudodifferential Operators. {\it Publications of the Research Institute for Mathematical Sciences} 
{\bf 43} (3), 585-623 (2007)




\bibitem{MP1}M\u antoiu, M., Purice, R.: 
Strict deformation quantization for a particle in a magnetic field.  
 J. Math. Phys.  {\bf 46}  (5), 052105 (2005) 	


\bibitem{MP2}M\u antoiu, M., Purice, R.: The magnetic Weyl calculus. 
 J. Math. Phys. {\bf 45} (4), 1394--1417 (2004)

\bibitem{MP3} M\u antoiu, M., Purice, R., Richard, S.: Spectral and propagation results for magnetic 
Schrodinger operators; 
A $C^*$-algebraic framework. {\it J. Funct. Anal.} {\bf 250} (1), 42-67 (2007)



\bibitem{adn} { Nenciu,G.:} { On asymptotic perturbation theory for
 quantum mechanics:
almost invariant subspaces and gauge invariant magnetic perturbation
 theory.} {\it J. Math. Phys.},
{\bf 43} (3), 1273--1298 (2002)


\bibitem{PC} Pedersen, T.G., Cornean, H.D.: 
Optical second harmonic generation from Wannier excitons.  {\it
  Europhysics Letters}  {\bf 78}, 27005 (2007) 


 \bibitem {RS4} {Reed, M., Simon, B::} {\it Methods of Modern
  Mathematical Physics IV: Analysis of Operators},  Academic Press, 1978.



\bibitem{R1}
 Roth, L.M.: "Theory of Bloch electrons in a magnetic field", 
J.Phys. Chem. Solids {\bf 23}, 433 (1962).

\bibitem{R2} Roth, L.M.: "Theory of Faraday effect in solids", 
Phys. Rev. {\bf 133}, A542 (1964).

\bibitem{SoWi} Sondheimer, E.H., Wilson, A.H.: The theory of the magneto-resistance effects in metals. {\it Proc. Roy. Soc. A}  {\bf 190} (1023), 435-455  (1947) 


  









\end{thebibliography}
\end{document}